\DeclareMathOperator{\ob}{ob}
\newcommand{\intro}[1]{\emph{#1}}
\newcommand{\bigcat}[1]{\mathsf{#1}}
\newcommand{\cat}[1]{\mathbf{#1}}
\newcommand{\fonc}[1]{\mathsf{#1}}
\newcommand{\Formulae}{\mathcal{F}}
\newcommand{\commentthis}[1]{}
\newcommand{\ie}{i.e.,}
\newcommand{\eg}{e.g.,}
\newcommand{\bsig}{\mathcal{K}} 
\newcommand{\bbig}{\cat{Bbg}} 
\newcommand{\T}{\fonc{T}} 
\newcommand{\theory}{\mathcal{T}}
\newcommand{\sig}{\Sigma}
\newcommand{\A}{\mathcal{A}}
\newcommand{\C}{\mathcal{C}}
\newcommand{\X}{\mathcal{X}}
\newcommand{\Sig}{\bigcat{SMCSig}} 
\newcommand{\SMCCat}{\bigcat{SMCCat}}
\newcommand{\Nat}{\mathbb{N}} 
\newcommand{\smc}{\textsc{smc}}
\newcommand{\imll}{\textsc{imll}}
\newcommand{\loc}{\mathit{loc}}
\newcommand{\prnt}{\mathit{prnt}}
\newcommand{\link}{\mathit{link}}
\newcommand{\ctrl}{\mathit{ctrl}}
\newcommand{\zero}{\mathbf{0}}
\newcommand{\un}{1}
\newcommand{\paral}{\mid}
\newcommand{\transl}[1]{\llbracket #1 \rrbracket}
\newcommand{\iso}{\cong}
\newcommand{\impll}{\multimap}
\newcommand{\tens}{\otimes}
\newcommand{\bigtens}{\bigotimes}
\newcommand{\Bigtens}[1]{\displaystyle{\bigtens_{#1}}}
\newcommand{\card}[1]{|#1|}
\newcommand{\ens}[1]{\{ #1 \}}
\newcommand{\alt}{\mathrel{|}}
\newcommand{\name}[1]{\ulcorner #1 \urcorner}
\title{Binding bigraphs\\as symmetric monoidal closed theories}
\author{Tom Hirschowitz\inst{1} \and Aur\'elien Pardon\inst{2}}
\institute{CNRS, Universit\'e de Savoie
  \email{tom.hirschowitz@univ-savoie.fr} \and ENS Lyon
  \email{aurelien.pardon@ens-lyon.fr}}
\begin{document}
\maketitle

\begin{abstract}
  \commentthis{
    Milner's bigraphs are a general framework for reasoning about
    distributed and concurrent programming languages. Notably, it has
    been designed to encompass both the pi-calculus and the Ambient
    calculus.

    This paper is only concerned with bigraphical syntax: given what
    we here call a bigraphical signature K, Milner constructs a (pre-)
    category of bigraphs BBig(K), whose main features are (1) the
    presence of relative pushouts (RPOs), which makes them
    well-behaved w.r.t. bisimulations, and that (2) the so-called
    structural equations become equalities. Examples of the latter
    are, e.g., in pi and Ambients, renaming of bound variables,
    associativity and commutativity of parallel composition, or scope
    extrusion for restricted names. Bigraphs follow a scoping
    discipline ensuring that, roughly, bound variables never escape
    their scope.

    Here, we reconstruct bigraphs using a standard categorical tool:
    symmetric monoidal closed (SMC) theories.  SMC structure is the
    categorical counterpart of intuitionistic multiplicative linear
    logic (IMLL): through the Curry-Howard-Lambek correspondence, an
    SMC signature corresponds to a set of IMLL axioms, and the free
    SMC category over a signature corresponds to IMLL proofs under the
    corresponding axioms, modulo cut elimination.

    Our construction refines the folklore view that bigraphs
    correspond to compact closed structure: SMC structure provides the
    expected input/output duality, but restricts it in order to
    enforce scoping.  Furthermore, it elucidates the slightly
    mysterious status of so-called edges in bigraphs. Finally, the
    obtained category is also considerably larger than the category of
    bigraphs, notably encompassing in the same framework terms and a
    flexible form of higher-order contexts.  We view this as a gain in
    modularity: we keep the same closed bigraphs, but provide more
    ways of cutting them into pieces, thanks to richer interfaces.  }

  We reconstruct Milner's~\cite{Milner:bigraphs} category of abstract
  binding bigraphs $\bbig (\bsig)$ over a signature $\bsig$ as the
  free (or initial) \emph{symmetric monoidal closed} (\smc) category $S
  (\theory_\bsig)$ generated by a derived theory $\theory_\bsig$.  The
  morphisms of $S(\theory_\bsig)$ are essentially proof nets from the
  Intuitionistic Multiplicative fragment (\imll) of Linear
  Logic~\cite{ll}.

  Formally, we construct a faithful, essentially injective on objects
  functor $\bbig (\bsig) \to S(\theory_\bsig)$, which is surjective on
  closed bigraphs (i.e., bigraphs without free names or sites).  The
  functor is not full, which we view as a gain in modularity: we
  maintain the scoping discipline for whole programs (bound names
  never escape their scope) but allow more program fragments,
  including a large class of binding contexts, thanks to richer
  interfaces.  Possible applications include bigraphical programming
  languages~\cite{Damgaard:Matching} and Rathke and Soboci\'{n}ski's
  derived labelled transition systems~\cite{modularLTS}.
  \end{abstract}

\section{Overview}
Milner's (binding) bigraphs~\cite{Milner:bigraphs} are a general
framework for reasoning about distributed and concurrent programming
languages, designed to encompass both the
$\pi$-calculus~\cite{Milner:pi} and the Ambient
calculus~\cite{Ambients}.  We are here only concerned with bigraphical
syntax: given what we call a \emph{bigraphical signature} $\bsig$,
Milner constructs a \emph{pre-category}, and then a category $\bbig
(\bsig)$, whose objects are \emph{bigraphical interfaces}, and whose
morphisms are bigraphs.

Its main features are (1) the presence of \emph{relative pushouts}
(RPOs) in the pre-category, which makes it well-behaved w.r.t.\
bisimulations, and that (2) in both the pre-category and the category,
the so-called \emph{structural} equations become equalities. Examples
of the latter are, e.g., in $\pi$ and Ambients, renaming of bound
variables, associativity and commutativity of parallel composition, or
scope extrusion for restricted names. Also, bigraphs follow a scoping
discipline ensuring that, roughly, bound variables never escape their
scope.

In this paper, we reconstruct bigraphs using standard algebraic tools.
To explain this, let us quickly review the notion of a
\emph{many-sorted algebraic theory}, which is central in universal
algebra. It is specified by first giving a \emph{signature}---a set of
sorts $X$ and a set $\sig$ of operations with arities---together with
a set of \emph{equations} over that signature. For example, the theory
for monoids is specified by taking only one sort $x$, and operations
$m: x \times x \to x$ and $e: 1 \to x$, together with the usual
associativity and unitality equations. We may equally well view this
signature as given by a graph
\begin{equation}
 \begin{diagram}
  (x \times x) & \rTo^{m} & x & \lTo^{e} & 1
\end{diagram}\label{eq:monoid}
\end{equation}
with vertices being the objects of the free category with (strict)
finite products generated by $X$.

In this paper, we use the same kind of theories, but replacing from
the start finite products with \smc\ structure.  \smc{} structure is
the categorical counterpart of \imll{}~\cite{ll,Trimble:phd}: through
the Curry-Howard-Lambek correspondence, an \smc{} signature amounts to
a set of \imll{} axioms, and the free \smc{} category $S (\Sigma)$
over a signature $\Sigma$ has as morphisms \imll{} proofs under the
corresponding axioms, modulo cut elimination.  We use an isomorphic
presentation of $S (\Sigma)$, essentially due to
Trimble~\cite{Trimble:phd}, in which morphisms are very much like
\imll{} \emph{proof nets}~\cite{ll}: they are kind of graphs, whose
\emph{correctness} is checked by (a mild generalisation of) the
well-known Danos-Regnier criterion~\cite{Danos:mll}.

Here, we translate any bigraphical signature $\bsig$ into an \smc{}
theory $\theory_\bsig$, and consider the free \smc\ category $S
(\theory_\bsig)$ generated by $\theory_\bsig$ as an alternative
category of bigraphs over $\bsig$.  To compare $S (\theory_\bsig)$
with Milner's $\bbig (\bsig)$, we construct a faithful functor $\T:
\bbig (\bsig) \to S (\theory_\bsig)$. This functor is moreover
essentially injective on objects (i.e., two objects with the same
image are isomorphic), so that it is essentially an embedding.

However, our functor $\T$ is not full: even between bigraphical
interfaces, $S (\theory_\bsig)$ contains morphisms which would be
ill-scoped according to Milner's scope rule. Neither is $\T$
surjective on objects: $S (\theory_\bsig)$ has many more objects
beyond bigraphical interfaces. This could be perceived as negative at
first sight, but it actually represents a gain in
modularity. Informally, even though our category is much more
versatile, it still prevents bound variables to escape their
scope. More formally, our functor is full on whole programs, i.e.,
bigraphs with no sites nor open names. Namely, it induces an
isomorphism on closed terms, i.e., of hom-sets $$S (\theory_\bsig) (I,
t) \iso \bbig (\bsig) (I, t),$$ where $I$ is the unit of tensor
product, and $t$ is a particular object representing terms\footnote{We
  cheat a little here, see the actual result
  Lemma~\ref{lemma:closed}.}.  So, our additional interfaces and
morphisms allow just as many whole programs as Milner's, but more
program fragments. Notably, it contains both the equivalent of terms
and a kind of multi-hole, higher-order, binding contexts, all
cohabiting happily.

In passing, our functor fully elucidates the status of so-called
\emph{edges} in bigraphs: we translate differently \emph{free} edges
(used for name restriction, much like $\nu$ in the $\pi$-calculus) and
\emph{bound} edges (used for linking so-called \emph{binding ports} to
their peers). In the former case, we translate the edge into a $\nu$
node (which may also be understood as representing the private name in
question); in the latter case, we simply remove the edge, and rely on
our use of directed graphs to represent the flow from the binding port
to its peers.

Finally, our algebraic approach directly extends to what
Debois~\cite{Debois:phd} calls discrete sortings, which amount to a
many-sorted variant of bigraphs. Given a set $X$ of sorts, instead of
only two sorts $t$ and $v$ for terms and variables (or names), one
starts with sorts $t_x$ and $v_x$ for each $x \in X$, which directly
allows to specify a typed signature.

\paragraph{Future work}
A possible application of our work is as an alternative representation
for bigraphical programming
languages~\cite{Damgaard:Matching}.  For example,
on the graphical side, the extensive litterature on efficient
correctness criteria for proof nets applies directly.  On the
syntactic side, \imll{} proofs provide an essentially algebraic
representation, i.e., one avoiding the use of variable binding and the
associated
trickery~\cite{PfenningElliott:hoas,PittsAM:newaas,hofmann:presheaf}.

Another possible application is as a handy foundation for Rathke and
\linebreak Soboci\'{n}ski's~\cite{modularLTS} work on deconstructing labelled
transition systems, which involves second-order (binding) contexts.

As to future research directions, we here only handle abstract
bigraphs, which do not have RPOs. We thus should generalise our
approach to deal with \emph{concrete} bigraphs, be it in the form of
Milner's original pre-category or in Sassone and Soboci\'{n}ski's
\emph{G-categories}~\cite{Sobocinski:grpos}, and then try to construct
the needed RPOs (or GRPOs).  

Another natural research direction from this paper concerns the
dynamics of bigraphs.  Our hope is that Bruni et
al.'s~\cite{Bruni:ccdc} very modular approach to dynamics may be
revived, and work better with \smc\ structure than with cartesian
closed structure. Specifically, with \smc\ structure, there is no
duplication at the static level, which might simplify matters.

\paragraph{Related work}
The construction of the free \smc\ category generated by an \smc\
theory is essentially due to Trimble~\cite{Trimble:phd}, followed by
others~\cite{Tan:phd,Hughes:freestar,Lamarche:nets}.  The construction
we use is a variant of Hughes'~\cite{Hughes:freestar} construction,
defined in our joint work with Richard Garner~\cite{GHP}.  It was
known that \smc\ (or cartesian closed) structure precisely represents
various kinds of variable binding (see, e.g.,
Barber~\cite{Barber:action}).  So we do not claim originality for the
use of \smc{} structure to recover bigraphs.

Sassone and Soboci\'{n}ski~\cite{Sobocinski:graphslics} share our goal of
categorically reconstructing bigraphs. They obtain very satisfactory
results for pure bigraphs as a bicategory of cospans over a particular
category of graphs. But their approach still has to scale up to deal
with binding.

Damgaard and Birkedal~\cite{Birkedal} axiomatise the category of
bigraphs as an equational theory over a term language with variable
binding. Our work may be seen as an essentially algebraic counterpart
of theirs (which relies on variable binding and the trickery evoked
above).  Moreover, they do not recognise the special status of bound
edges, so our construction is not a mere reformulation of theirs in
categorical terms.

Milner~\cite{Milner:bigraphs2} and Debois~\cite{Debois:phd}
propose other extensions of Milner's original scope condition, the
latter subsuming the former~\cite[Section~6.4]{Debois:phd}.  Debois
sees binding as a \emph{sorting} over the category of pure bigraphs
generated by $U(\bsig)$, where $U$ forgets the binding information.
His construction, starting from a `no bound name escapes its scope'
predicate, is reminiscent from Hyland and Tan's \emph{double glueing}
construction~\cite{Tan:phd}. However, his construction is not known to
satisfy any universal property, and an efficient (e.g., algebraic)
representation of it still has to be found. 

Finally, Grohmann and Miculan propose \emph{directed
  bigraphs}~\cite{grohmann} as a more flexible framework for
bigraphs. Hildebrandt~\cite{Hildebrandt:choco} proposes an algebraic
approach to them, using \emph{compact closed} categories instead of
\smc{} categories. This line of work does not handle scope directly:
one has to resort to a sorting in the above sense, with the same
inconveniences.

\paragraph{Summary of contributions}
In view of the above, our contribution thus mainly lies in working out
the details of the encoding of bigraphs as \smc{} theories, plus
unraveling and simplifying the status of edges. Our task is made
easier by Hughes' economic presentation of the free \smc{} category.

Furthermore, previous work using \smc{} structure as a representation
for binding mostly lead to conservative extensions, i.e., full and
faithful functors. Our work emphasises that Milner's \emph{ad hoc}
condition leaves room for generalisation, and provides a new,
canonical condition, which benefits from efficient criteria from
linear logical literature.

\paragraph{Structure of the paper}
Section~\ref{sec:smctheories} recalls the construction of the free
\smc{} category generated by an \smc\ theory, including a
specialisation to the case where a sort is equipped with a commutative
monoid object structure.  Section~\ref{sec:big} reviews bigraphs,
defining along the way our translation of bigraphical signatures. In
Section~\ref{sec:trans}, we construct our functor from bigraphs to the
corresponding free \smc\ category, and show that it is an isomorphism
on closed terms.

\section{Symmetric monoidal closed theories}\label{sec:smctheories}
This section reviews \smc{} theories and their free models; see our
note~\cite{GHP} for details. The constructions are essentially due to
Trimble~\cite{Trimble:phd}, but reworked using our extensions to
Hughes'~\cite{Hughes:freestar} presentation.

\subsection{Signatures}
We should start our overview with the definition of \smc{} categories:
these are symmetric monoidal categories, i.e., categories with a
tensor product $\tens$ on objects and morphisms, symmetric in the
sense that $A \tens B \iso B \tens A$, such that $(- \tens A)$ has a
right adjoint $(A \impll -)$, for each $A$. We do not give further
details, since we are interested in describing the free such category,
which happens to be easier. Knowing that there is a category $\SMCCat$
of \smc{} categories and strictly structure-preserving functors should
be enough to grasp the following.

As sketched in the introduction, a \emph{signature} $\Sigma$ is given
by a set $X$ of sorts, equipped with a graph whose vertices are
\imll{} formulae over $X$, as defined by the grammar:
$$\begin{array}{rcl@{\hspace*{2cm}}r}
  A, B, \ldots \in \Formulae (X) & ::= & x \alt I \alt A \tens B \alt A \impll B & 
  \mbox{(where $x \in X$).}
\end{array}$$
We think of each edge $A \to B$ of the graph as specifying an operation 
of type $A \to B$.
A morphism of signatures $(X, \Sigma) \to (Y, \Sigma')$ is a 
function $X \rTo^{f} Y$, equipped with a morphism of graphs, whose 
vertex component is ``$\Formulae (f)$'', i.e., it sends 
any formula $A (x_1, \ldots, x_n)$ to $A (f (x_1), \ldots, f (x_n))$.
This defines a category $\Sig$ of signatures.

There is then a forgetful functor $\SMCCat \rTo^{U} \Sig$ sending
each \smc{} category $\C$ to the graph with
\begin{description}
\item[vertices:] formulae in $\Formulae (\ob (\C))$, and
\item[edges $A \to B$:] morphisms $\transl{A} \to \transl{B}$ in $\C$,
  where $\transl{A}$ is defined inductively to send each syntactic
  connective to the corresponding function on $\ob (\C)$.
\end{description}

Trimble~\cite{Trimble:phd} constructs an \smc{} category $S (\Sigma)$
from any signature $\Sigma$, which extends to a functor $\Sig \rTo^{S}
\SMCCat$, left adjoint to $U$: for any \smc\ category $\C$ and natural
transformation $\Sigma \rTo^{f} U (\C)$, there is a unique \smc\
functor $S (X) \rTo^{f^*} \C$ such that $f$ decomposes as
\begin{diagram}[inline]
  X & \rTo^{\eta} & US (X) & \rTo^{U (f^*)} & U(\C).
\end{diagram}

\subsection{The free \smc{} category}
How does $S (\Sigma)$ look like?  Its objects are \imll{} formulae in
$\Formulae (X)$ and morphisms are kind of proof-nets with some
\emph{cells} representing operations.  A morphism from $A$ to $B$ in
$S (\Sigma)$ thus consists of
\begin{itemize}
\item a finite set $C = \ens{c, \ldots}$ of cells labelled by
  operations $\alpha_c \to \beta_c$ in $\Sigma$,
    \item and \emph{wires} connecting the \emph{ports} together.
\end{itemize}
For example in the $\pi$-calculus, the operations $\mathit{get}$ and
$\mathit{send}$ correspond to cells:
\begin{center}
\includegraphics{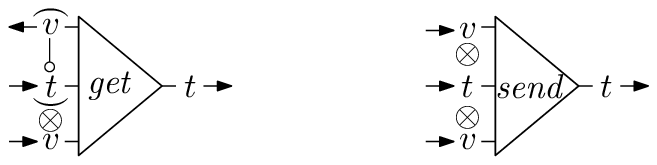}
\end{center}
We always use the flat edge (or base) of the polygon to denote the domain. 
The orientation of a port corresponds to its sign, see below. 

A port is a leaf occurrence (atomic or $I$) in $A$, $B$, or in some
$\alpha_c$, $\beta_c$.
Equivalently, a port is a leaf occurrence in the formula
\begin{equation}
  (A \tens \Bigtens{c \in C} (\alpha_c \impll \beta_c)) \impll B\label{eq:ports}
\end{equation}
(or $A \impll B$ when $C$ is empty).
In a formula, a port is \emph{positive} when it lies to the left of an even
number of $\impll$'s, and \emph{negative} otherwise\footnote{The sign of a port
in $A$ is directly apparent viewing $A$ is a classical LL formula, see the next
paragraph.}.
The sign of a port in a morphism is its sign in the formula~\eqref{eq:ports}.

Wires are oriented, with sources the \emph{negative} ports of the
morphism, and targets in its \emph{positive} ports.  For each sort $x
\in X$, the wires must induce a bijection between ports labelled $x$
(which we call $x$ ports).  A negative $I$ port can be wired to any
positive port.

An example of morphism is presented in Fig.~\ref{freesig:example:ii}, where
$\Sigma$ has sorts $\{t,v\}$ and operations $\mathit{get}$, $\mathit{send}$, $v
\rTo^{c} v \tens v$, $I \rTo^{\nu} v$ and $t \tens t \rTo^{p} t$.
\begin{figure}[ht] \centering
    \includegraphics{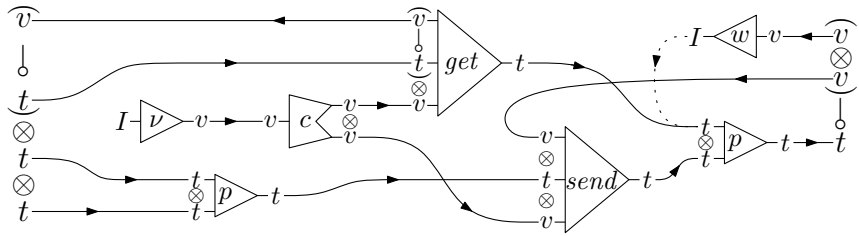}
    \caption{A morphism of $S(\Sigma)$.}
    \label{freesig:example:ii}
\end{figure}

\paragraph{Correctness}
But, crucially, not all such graphs qualify as morphisms of $S
(\Sigma)$: they have to satisfy a correctness criterion essentially
due to Danos-Regnier~\cite{Danos:mll}, which goes as follows.
The formula~\eqref{eq:ports} may be written using the connectives
of \emph{classical} linear logic, defined by the grammar:
$$\begin{array}{rcl@{\quad \mid \quad}l@{\quad \mid \quad}l}
A, B, \ldots & ::= & x    &  \un &  A \otimes B \\
&  \mid   & x^\bot & \bot & A \parr B.
\end{array}$$
We have removed $A \impll B$, now encoded as $A^\bot \parr B$; some
classical formulae are not expressible in \imll{}, such as $\bot$, or
$x \parr x$. The de Morgan dual $A^\bot$ of $A$ is defined as usual.

A \emph{switching} of a classical formula is its abstract syntax tree,
where exactly one argument edge of each $\parr$ has been removed. A
\emph{switching} of a candidate morphism $f$ is a graph obtained by
gluing along ports the wires of $f$ with a switching of the
formula~\eqref{eq:ports}. The candidate is then \emph{correct} iff all
its switchings are acyclic and connected.

Equivalently, since~\eqref{eq:ports} is ultimately a $\parr$, one may
separately glue $f$ with switchings of $A^\bot$, $B$, and each
$\alpha_c \tens \beta_c^\bot$. Notably, a cell of type $A \to x$ for
some sort $x$ is switched by connecting $x$ to a switching of $A$.

In Fig.~\ref{freesig:super_example} is pictured one switching (among 64) of
the graph underlying the morphism in Fig.~\ref{freesig:example:ii}. 
The displayed connectives are those of the formula~\eqref{eq:ports} translated
into classical linear logic. 
\begin{figure}[ht] \centering
    \includegraphics{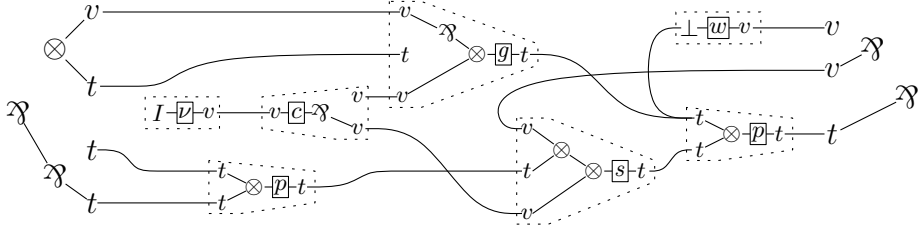}
    \caption{A connected and acyclic switching of the morphism in Fig.~\ref{freesig:example:ii}.}
    \label{freesig:super_example}
\end{figure}

\paragraph{Rewiring} 
Finally, morphisms are quotiented by Trimble \emph{rewiring}: a
morphism \emph{rewires} to another by changing the target of an edge
from some negative $I$ port, as soon as this preserves
correctness. Rewiring is the smallest equivalence relation generated
by this relation.  
In Fig.~\ref{freesig:example:ii}, the dotted wire starting from the
negative $I$ port can be rewired to any positive port because all switchings
would remain trees. 

\subsection{Theories}\label{subsec:co:monoids}
That gives the construction for signatures. We now extend it to \smc\
theories: define a theory $\theory$ to be given by a signature $\sig$,
together with a set $E_{A,B}$ of equations between morphisms in $S
(\sig) (A, B)$, for each \imll\ formulae $A, B$.  The free \smc\
category $S (\theory)$ generated by such a theory is defined in our
note~\cite{GHP} to be the quotient of $S (\sig)$ by the
equations. Constructing $S (\theory)$ graphically is more direct than
could have been feared: we first define the binary predicate $f_1 \sim
f_2$ relating two morphisms
$C \pile{\rTo^{f_1, f_2} \\
  \rTo} D$ in $S (\sig)$ as soon as each $f_i$ decomposes as
\begin{diagram}
  C & \rTo^{\iso} & I \tens C & \rTo^{\name{g_i} \tens C} & (A \impll
  B) \tens C & \rTo^{f} & D
\end{diagram}
with a common $f$, with $(g_1, g_2) \in E_{A, B}$, and where $\name{g}$
is the currying of $g$. Then, we take the smallest generated
equivalence relation, prove it stable under composition, and quotient
$S (\sig)$ accordingly, which yields the free \smc\ category $S
(\theory)$ generated by the theory $\theory = (\sig, E)$.

\paragraph{Commutative monoid objects}
We finally slightly extend the results of our note~\cite{GHP} to
better handle the special case of commutative monoids objects. This
will be useful in our translation of bigraphs, where the sort $t$ of
terms has a commutative monoid structure given by parallel composition
and $\zero$. Assume a theory $(\sig, E)$ where a sort $t$ is equipped
with two operations $m$ and $e$ as in~(\ref{eq:monoid}), with
equations making it into a commutative monoid object ($m$ is
associative and commutative, $e$ is its unit).  Further assume that
$m$ and $e$ do not occur in other equations.

Let $\sig'$ be the result of removing the operations $m$ and $e$ in
$\sig$.  We define a relaxed version of our morphisms where each
negative $t$ port is connected to a positive one,
but not necessarily bijectively.  This defines a category isomorphic
to $S(\sig)$, in which the operations $m$ and $e$ are built into
the linking.  The isomorphism is pictured in Fig.~\ref{monoid_to_c}.
\begin{figure}[ht] \centering
    \includegraphics{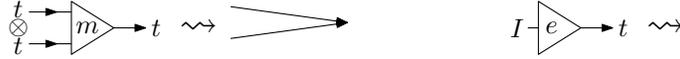}
    \caption{Contracting $m$ cells and deleting $e$ cells.}
    \label{monoid_to_c}
\end{figure}

\section{Binding bigraphs and the translation of signatures}\label{sec:big}
We now proceed to recall some definitions from Milner~\cite{Milner:bigraphs}, along which we give our translation of
bigraphical signatures $\bsig$ into \smc\ theories $\theory_\bsig$. We
then turn to our translation from the corresponding category of
bigraphs to the free model $S (\theory_\bsig)$.

\subsection{Signatures}
\label{bigraph:sig}
\begin{definition}
  A \intro{bigraphical (binding) signature} is a 4-uple $(
  \bsig,B,F,\A )$ where $\bsig$ is a set of \intro{controls}, $B,F:
  \bsig \to \Nat$ are maps providing a \emph{binding} and a
  \emph{free} arity for each control and $\A \subseteq \bsig$ is a set
  of \emph{atomic} controls.
\end{definition}
We fix such a bigraphical signature $\bsig$ for the rest of the paper. 
This signature can be translated into a \smc\ signature $\sig_\bsig$ over two
sorts $\{t,v\}$, standing for terms and variables (or names).
It consists of the following \emph{structural} operations, accounting for the
built-in structure of bigraphs:
$$
\begin{diagram}[nohug,width=1.2cm,height=.3cm]
    t \tens t & \rTo^{\mid} & t & \lTo^{\zero} & I \\
    & & & & & & I & \rTo^{\nu} & v \\
    v \tens v & \lTo^{c} & v & \rTo^{w} & I \\
\end{diagram}
$$
plus, for all controls $k$, a \emph{logical} operation
$$ \begin{diagram}[inline,width=4em]
(v^{\tens B(k)} \impll x) \tens v^{\tens F(k)} & \rTo^{\quad K_k} & t 
\end{diagram} 
$$ 
where $x = I$ if $k$ is atomic and $x = t$ otherwise.

We call $\theory_\bsig$ the theory consisting
of the operations in $\sig_\bsig$, with the equations making
\begin{itemize}
\item $(t, \paral, \zero)$ into a commutative monoid object, 
\item $(v, c, w)$ into a cocommutative comonoid object ($c$ is
  coassociative, cocommutative, and $w$ is its unit), and
\item $\nu$ and $w$ annihilate each other, as in
	\begin{center}
	    \includegraphics{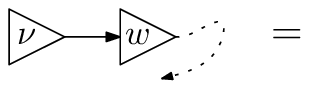} $\cdot$
	\end{center}
\end{itemize}
We now proceed to describe the category $\bbig(\bsig)$ of
\emph{abstract binding bigraphs} over $\bsig$, which we relate in
Section~\ref{sec:trans} to the free model $S (\theory_\bsig)$ of
$\theory_\bsig$.

\subsection{Interfaces}
We assume an infinite and totally ordered set $\X$ of \emph{names}. 

\begin{definition}
  A \intro{bigraphical (binding) interface} is a triple $(
  n,X,\loc )$ where $n$ is a finite ordinal, $X$ a finite set of
  names and $\loc: X \to n + \{ \bot \}$ a function
  called \intro{locality map}.
\end{definition}
A name $x$ is said \emph{global} if $\loc(x) = \bot$ and \emph{local} or
\emph{located at $i$} when $\loc(x) = i \in n$.

Bigraphical interfaces are the objects of the category $\bbig(\bsig)$.
We define a function $\T$ from these objects to \imll\ formulas, \ie\
objects of $S (\theory_\bsig)$, by:
\begin{equation}
\begin{array}{c c c c c}
    \T & : & ( n,X,\loc) & \mapsto & v^{\tens n_g} \impll \Bigtens{i \in n} ( v^{\tens n_i} \impll t) 
  \end{array}\label{eq:T:ob}
\end{equation}
  where $n_g = \card{\loc^{-1}(\bot)}$ and for all $i \in n$, $n_i =
  \card{\loc^{-1}(i)}$.
  The ordering on $\X$ induces a bijection between $X$ and  $v$ leaves in the formula. 

In \cite{Milner:bigraphs2}, Milner presents a slight generalisation of binding
bigraphs, where names have multiple locality.
Some interfaces cannot be simply translated into \imll\ formulas as before,
\eg\ if $x$ is located in $0$ and $1$ and $y$ in $1$ and $2$, this dependency
cannot be expressed directly in an \imll\ formula.

\subsection{Place graph}
Let $n$ and $m$ be two finite ordinals. 
\begin{definition}
    A \intro{place graph} $( V,\ctrl,\prnt ) : n \to m$ is a pair where:
    \begin{itemize}
	\item $V$ is a finite set of \emph{nodes},
	\item $\ctrl: V \to \bsig$ is a function called \emph{control map} and
	\item $\prnt: n + V \to V + m$ is an acyclic function
	    called \emph{parent map} whose image does not contain any
	    atomic node.
    \end{itemize}
\end{definition}
The ordinals $n$ and $m$ index respectively the \emph{sites} and \emph{roots}.
A node is said \emph{barren} if it has no preimage under the parent map (atomic
nodes are thus barren). 

The relation $\prec$ over sites, roots and nodes defined by:
$$ x \prec y \iff \exists k > 0\,,\ \prnt^k(x) = y $$
is a (strict) partial order. 
The maximal elements of $\prec$ are the roots; the minimal elements are the
barren nodes (including atomic nodes) and the sites. 

\subsection{Link graph}
Let $X$ and $Y$ be two finite sets of names. 
\begin{definition}
    A \intro{link graph} $( V,E,\ctrl,\link ) : X \to Y$ is a tuple
    where:
    \begin{itemize}
	\item $V$ is a finite set of \emph{nodes},
	\item $E$ is a finite set of \emph{edges},
	\item $\ctrl: V \to \bsig$ is a \emph{control map} and
	\item $\link: P + X \to E + Y$ is a function called the \emph{link
	    map} 
    \end{itemize}
    with $P$ being the set of \emph{ports}, \ie\ the coproduct of binding ports
    defined by $P_B = \coprod_{v \in V} B(\ctrl(v))$ and free ports $P_F =
    \coprod_{v \in V} F(\ctrl(v))$. 
    Moreover, $\link$ must satisfy the \emph{binding rule}:
    \begin{quote}
	For all binding ports $p \in P_B$, $\link(p) \not \in Y$. 
    \end{quote}
\end{definition}

This binding rule is not mentionned in the original
paper~\cite{Milner:bigraphs} about bigraphs whereas it is mandatory
for the scoping discipline to be stable under composition (it is added
in~\cite{Milner:bigraphs2}). 
Alternatively, we can only require $\link(p)$ to not be a global name of $Y$;
the scope rule (defined below) handles the case of local names of $Y$. 

We define the \emph{binders} of our link
graph to be the local names of $Y$ (located at a root) and the binding
ports (located at a node) $P_B$.
Two distinct \emph{points} (\ie\, two elements of $P + X$) $x$ and $y$ are
\emph{peers} when $\link(x) = \link(y)$. 
An edge is \emph{idle} when it has no preimage under the link map. 

\subsection{Abstract binding bigraphs}
\label{bigraph:sec:def}
Let $U = ( n,X,\loc )$ and $W = ( m,Y,\loc' )$ be two
bigraphical interfaces. 
\begin{definition}
    A bigraph $G = ( V,E,\ctrl,\prnt,\link ) : U \to W$ is a tuple where: 
    \begin{itemize}
	\item $(V,\ctrl,\prnt): n \to m$ is a place graph,
	\item $(V,E,\ctrl,\link): X \to Y$ is a link graph, 
	\item $G$ satisfies the \emph{scope rule}:
	    \begin{quote}
		If $p$ is a binder located at $w$, then each of its peers is located at
		some $w' \prec w$. 
	    \end{quote}
    \end{itemize}
\end{definition}

\begin{example}
  An example of bigraph, roughly corresponding to the $\pi$-calculus
  context $$\nu x (\bar{x}y.(\square_2 \mid \square_3) \mid
  x(z).\square_1)$$ where a global variable $t$ is not used, is given in Fig.~\ref{bigraph:milner} using a
  representation from~\cite{Milner:bigraphs} and another from~\cite{Hildebrandt:choco}.  Binding (resp. free) names and ports are
  pictured by $\bullet$ (resp. $\circ$).
\end{example}
\begin{figure}[ht!]\centering
    \includegraphics{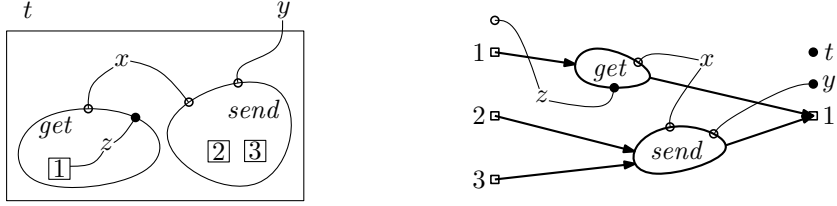}
    \caption{Two bigraphical representations of the same $\pi$-term.}
    \label{bigraph:milner}
\end{figure}

The binding rule ensures that no binding port $p$ is peer of a name in
$Y$, hence $\link (p)$ has to be an edge. Moreover, by acyclicity of
$prnt$, no two binding ports may be peers, hence edges are linked to
at most one binding port.  The set of edges may thus be decomposed
into a set of \emph{free} edges $E_F$ (without binding port) and a set
of \emph{bound} edges $E_B$ in one-to-one correspondence with $P_B$ by
the link map: $E = E_F \uplus E_B \iso E_F + P_B$.

Finally, two bigraphs are \emph{lean-support} equivalent when after discarding
their idle edges, there is an isomorphism between their sets of nodes and edges
preserving the structure. 

\begin{definition}
    The category $\bbig(\bsig)$ of \emph{abstract binding bigraphs over
    $\bsig$} has bigraphical interfaces as objects and lean-support equivalence
    classes of bigraphs as morphisms. 
\end{definition}

The composition of two bigraphs
\begin{diagram}[inline,width=.7cm]
  U_1 & \rTo^{G} & U_2 & \rTo^{G'} & U_3
\end{diagram} is defined by taking the
coproduct of their nodes, edges and control maps and the composition of parent
and link maps (modulo some bijections on sets), forgetting the roots/sites from
$U_2$.
Acyclicity of the parent map, and the binding and scope rules are preserved by
composition.

\section{Translation}\label{sec:trans}

We now want to show how a binding bigraph $G = ( V,E,\ctrl,\prnt,\link
): U \to W$ over $\bsig$ can be translated into a morphism $\T(G):
\T(U) \to \T(W)$ in the free model $S (\theory_\bsig)$ of the \smc\
theory $\theory_\bsig$.  We have already defined $\T$ on objects
in~\eqref{eq:T:ob}. Now, let $U = ( n,X,\loc )$ and $W = ( m,Y,\loc'
)$. We will define the support $C$ of $\T(G)$ as the disjoint union
of:
\begin{itemize}\item a \emph{logical} support $C$ containing a
  $K_k$ cell for every node whose control is $k$ and a $\nu$ cell for
  every free edge in $E_F$, and
\item a \emph{structural} support $C'$ consisting of $c$ and $w$
  cells, which we define below.
\end{itemize}
We then specify the graph $\T (G)$ for each sort in $\ens{t, v}$
separately, and for $I$.  For example, the image by $\T$ of the
bigraph in Fig.~\ref{bigraph:milner} is the morphism in
Fig.~\ref{freesig:example:ii}. 

\subsection{Places}

First, since $(t, |, \zero)$ has a commutative monoid object
structure, the representation of Section~\ref{subsec:co:monoids}
applies: we just have to define a function from negative $t$ ports to
positive ones.  Now, for any set $X$ labeled in formulae, denote by
$X^+_t$ its set of positive $t$ ports, and similarly for
$X^{+,-}_{t,v,I}$. Now, considering each cell $c$ to be labelled by
the formula $\alpha_c \impll \beta_c$, we have:
\begin{itemize}
\item $C^+_t \iso V$, because each type of cell $K_k$ has one positive
  $t$ port,
\item $C^-_t \iso V_{\mathit{na}} \hookrightarrow V$, where
  $V_{\mathit{na}}$ is the set of non-atomic nodes, because there is
  one negative $t$ port for each non-atomic cell,
\item $\T (U)^+_t \iso n$, because for each $i \in n$ there is a
  positive $t$ port in $\T (U)$,
\item similarly, $\T (W)^+_t \iso m$, and finally
\item $\T(W)^-_t \iso \T(U)^+_t \iso \emptyset$. 
\end{itemize}
Our morphism $\T (G)$ is thus defined on the sort $t$ by
the function $f_t$:
\begin{diagram}[width=2cm,height=0.8cm]
  \T (U)^+_t + C^+_t + \T (W)^-_t & 
  \rTo^{\iso} & \T (U)^+_t + C^+_t & \rTo^{\iso} & n + V \\
  \dDashto<{f_t} &&&& \dTo>{\prnt} \\
  \T (U)^-_t + C^-_t + \T (W)^+_t & \lTo^{\iso} & \T (V)^+_t + C^-_t & \lTo^{\iso} & m + V_{\mathit{na}} \text{.}
\end{diagram}

\subsection{Links}
The function $f_v$ for $v$ requires more work, and involves defining
the structural support $C'$.  Recall that the data is the function
$\link: P \uplus X \to E \uplus Y$.

We start with an informal description of $f_v$ based on
Fig.~\ref{bigraph:t_e}, in which bold arrows come from binders.
First, we deal with points sent to edges.  There are two kinds of
edges.  

First, we understand each free edge $e$ as the creation of a fresh
name, and each free point $p$ in $P_F \uplus X$ sent to $e$ as an
occurrence of this free name. Accordingly, $e$ is replaced by its
$\nu$ cell in $C$, and each $p$ becomes a $v$ port in $\T (U)^- +
C^-$.  We hence link the $v$ port of the $\nu$ cell to each
corresponding $p$, through a tree of $c$ and $w$ cells, as depicted in
the bottom row.

Second, we understand each bound edge $e$ as an indirection to its
binding port $p_0 \in P_B$, itself understood as a bound name. We
further understand each free peer $p \in P_F \uplus X$ of $p_0$ as an
occurrence of the bound name.  Accordingly, we completely forget about
$e$, $p_0$ becomes a $v$ port in $C^+$, and each $p$
becomes a $v$ port in $C^- + \T (U)^-$, hence we link $p_0$ to each
corresponding $p$, again through a tree of $c$ and $w$ cells.

Finally, points $p$ not sent to an edge are sent to some name $y \in
Y$.  But each such $p$ becomes a $v$ port in $\T (U)^- + C^-$ and each
such $y$ becomes a $v$ port in $\T (W)^-$, hence we link $y$ to each
$p$, again using $c$ and $w$ cells.  This determines the structural
support $C'$, as well as $f_v$.  Finally, for the $I$ part $f_I$, each
negative $I$ port arises from a structural $w$ cell. But in
the above each cell is generated by \emph{one} $v$ port (the fresh
name or the binder). In the former case, we may safely link our $I$
port to any valid $t$ port. In the latter, the binder occurs to the
left of a $\impll$, whose right-hand side is a $t$ port, to which we
safely link our $I$ port.

\begin{figure*}[t]\centering
    \includegraphics{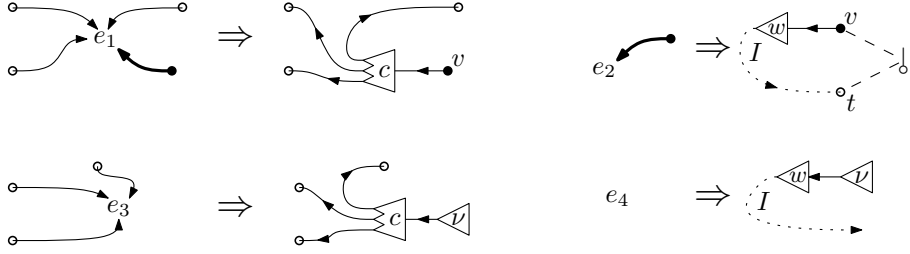}
    \caption{Translation of $\link$.}
    \label{bigraph:t_e}
\end{figure*}

More formally, observe from our translation of signatures and
interfaces, plus the logical support $C$ defined above, that:%
\begin{itemize}
\item each free edge in $E_F$ corresponds to one $\nu$ cell,
  hence to one port in $C^+_v$,
\item each binding port in $P_B$ corresponds to one negative
  occurrence of $v$ in the domain of some cell in $C$, hence to one
  port in $C^+_v$,
\item each local name in $Y$ corresponds  
  one port in $\T (W)^-_v$.
\end{itemize}
Thus, we have an isomorphism $E_F + P_B + Y \iso C^+_v + \T (W)^-_v$.
Similarly, free points in $P_F + X$ correspond to ports in $C^-_v + \T
(U)^-_v$, i.e., $P_F + X \iso C^-_v + \T (U)^-_v$.

We may thus define a first function $\link'$ by:
$$ \begin{diagram}[width=2cm,height=0.8cm]
    C^-_v + \T (U)^-_v & \rTo^{\iso} & P_F + X & \rInto & P_B + P_F + X \\
   \dDashto<{link'} &&&& \dTo>{\link} \\
   C^+_v + \T (W)^-_v & 
   \lTo^{\quad \iso} & E_F + P_B + Y & \lTo^{\quad \iso} & E + Y \text{.}
\end{diagram} $$

We then encode this function by a forest of $c$ and $w$ cells $C'$ (as
pictured in Fig.~\ref{function_to_cwtree}), to obtain a function
$C^+_v + C'^+_{v} + \T (W)^-_v \rTo^{f_v} C^-_v + C'^-_{v} + \T
(U)^-_v$, which qualifies as the $v$ part of our morphism. %
\begin{figure}[ht]\centering
    \includegraphics{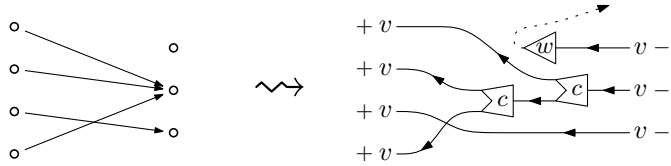}
    \caption{Translation of a function using $w$ and $c$ cells.}
    \label{function_to_cwtree}
\end{figure}%
The rest follows similarly.


This defines a function from bigraphs to candidate morphisms
(respecting domain and codomain).  We now show that it extends to a
functor.
\subsection{The functor}

First, we prove that the image of a bigraph is correct, i.e., is a
proper morphism.

\begin{lemma}\label{lemma:connected}
    All switchings of $\T(G)$ are connected. 
\end{lemma}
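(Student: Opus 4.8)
The plan is to fix an arbitrary switching $\sigma$ of the candidate morphism $\T(G)$ and to exhibit, for every vertex of the switching graph, a path to a single fixed block. The observation that makes this manageable is that a switching only ever deletes \emph{formula-tree} edges --- one argument of each $\parr$ of the witnessing formula~\eqref{eq:ports} --- and never deletes a \emph{wire}. Consequently every wire produced by the translation ($f_t$, $f_v$, $f_I$) is present in all of the exponentially many switchings, and the whole task reduces to checking that, whatever subtrees the $\parr$-switches sever, these switch-invariant wires always reglue the pieces into one component.

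First I would isolate the switch-invariant skeleton. Since the $t$-component of $\T(G)$ is the function $f_t = \prnt$, following $t$-wires realises the parent relation, so by acyclicity of $\prnt$ every tree of the place forest is internally connected, from its sites and nodes up to its root; these $t$-wires persist under every switching. Moreover the $\bigtens_i$ occurring in $\T(W)$ is a $\tens$, so all its branches (one $\parr$-node per root) hang together switch-invariantly on a common $\tens$-spine, and each $\nu$ cell's $I \tens v^{\bot}$ ties its $I$- and $v$-leaves. Thus, before any $\parr$-cut is considered, $\sigma$ already decomposes into a small family of switch-invariant blocks.

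It then remains to bridge the $\parr$-cuts, i.e.\ the internal $\impll$'s of the interfaces ($v^{\tens m_i}\impll t$) and of the cells ($v^{\tens B(k)}\impll x$). On switching, such a $\parr$ detaches one subtree, and I must produce a persistent wire reattaching it. The bridges are the $v$-wires, realising $\link$ through the comonoid trees of $c$ and $w$ cells, together with the $\bot$-jumps placed by $f_I$. The load-bearing structural input is the \emph{scope rule}: the occurrences of a binder located at $w$ lie at places $w'\prec w$, hence inside the corresponding forest subtree; so however a $v^{\tens m_i}\impll t$ or $v^{\tens B(k)}\impll x$ is switched, the side carrying the binder and the side carrying its occurrences (or the root) always land in a common block through a comonoid wire. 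Binders with no occurrence at all, and any stray units, are caught instead by an $f_I$-jump into a positive $t$-port, as prescribed in the construction (to any valid $t$ for a $\nu$-generated tree, to the $t$ on the right of the enclosing $\impll$ for a binder-generated one). Iterating this bridging along a well-founded order compatible with $\prec$ and the binder hierarchy merges all blocks into a single component.

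The main obstacle is to make this bridging watertight for \emph{every} switching rather than for a conveniently chosen one: one must rule out that some combination of $\parr$-switches severs a whole set of cells from the rest. I expect the crux to be proving the invariant that every severed subtree contains at least one leaf carrying a wire --- a $v$-link to an already-connected binder or occurrence, or a negative-$I$ jump --- whose other endpoint lies in the part already shown connected. This is exactly where the totality of $\prnt$ and $\link$, the scope rule, and the deliberate placement of the $f_I$ jumps are indispensable, and I would formalise the argument as a well-founded induction maintaining that invariant.
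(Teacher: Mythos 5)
Your overall strategy is the paper's: a well-founded induction along $\prec$, with the scope rule guaranteeing that the wires emanating from a binder re-attach whatever the $\parr$-switches sever, and the $f_I$ jumps catching occurrence-less binders. The paper's appendix proof is exactly the precise form of your announced invariant: it shows by induction that every binding port of a node or root $p$ has its image connected to $\T(p)$, and then sweeps up roots, global names and $\nu$ cells at the end.

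There is, however, one step in your write-up that is false as stated, and it sits precisely where the difficulty of the proof lives. You claim in your second paragraph that ``every tree of the place forest is internally connected'' switch-invariantly. The $t$-wires realising $\prnt$ do persist, but they join a child's \emph{output} $t$ port to its parent's \emph{content} $t$ port (the $t$ on the right of $v^{\tens B(k)} \impll t$), and the path from that content port back to the rest of the parent cell goes through the cell's internal $\parr$ --- which a switching may cut, detaching the parent's entire subtree of children from its output. So the place forest is \emph{not} a switch-invariant block; its internal connectivity is exactly what the induction must establish, and you cannot take it as the starting skeleton. You partly retract this in the next paragraph by listing $v^{\tens B(k)}\impll x$ among the $\parr$-cuts to bridge, but the argument should be restructured so that the induction hypothesis itself supplies the content-to-output connection of each intermediate cell (as the paper does when it says the cells between $p'$ and $p$ ``have their $t$ ports connected thanks to the induction hypothesis''). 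A second point you gloss over: the comonoid trees of $c$ cells are themselves heavily switched, so in a given switching a binder is connected through its tree to \emph{one} occurrence only, not to all of them; the argument survives because the scope rule locates \emph{every} peer strictly below, so whichever single occurrence the switching selects still lies in the required subtree --- but this needs to be said, since your phrase ``land in a common block through a comonoid wire'' suggests all occurrences are reached at once. Finally, your last paragraph explicitly defers the crux (``I expect the crux to be proving the invariant\dots''); as it stands the proposal is a correct plan with one wrong intermediate claim, not yet a proof.
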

\begin{proof}
  Essentially an induction over the place ordering $\prec$.
\end{proof}
The following seems known \cite{Soloviev:imll}:
\begin{lemma}
  Any switching of a morphism in $S(\theory_\bsig)$ is acyclic iff it
  is connected.
\end{lemma}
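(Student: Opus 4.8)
The plan is to reduce the equivalence to a single Euler-characteristic identity, namely that in \emph{any} switching the number of edges equals the number of vertices minus one. Once this is in hand, the rest is pure graph theory: for any finite graph one has $E = V - c + \beta_1$, where $c$ is the number of connected components and $\beta_1 \geq 0$ is the cyclomatic number, which vanishes exactly when the graph is acyclic. If $E = V - 1$, this forces $\beta_1 = c - 1$; since both quantities are non-negative and $c \geq 1$, acyclicity ($\beta_1 = 0$) is then equivalent to connectedness ($c = 1$), both amounting to the switching being a tree. So I would first isolate and prove $E = V - 1$.

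Fix a morphism $f : A \to B$ and one of its switchings $\sigma$. Its vertices are the nodes of the syntax tree of the classical form of~\eqref{eq:ports}, so $V$ counts all $\tens$- and $\parr$-nodes together with all leaves (ports). The edges of $\sigma$ are of two kinds: the surviving syntax-tree edges and the wires of $f$. The full syntax tree has $V - 1$ edges, and switching removes exactly one argument edge per $\parr$-node; writing $p$ for the number of $\parr$-nodes, there are thus $V - 1 - p$ tree edges. As for the wires, every \emph{negative} port is the source of exactly one wire (a bijective wire for an atomic port, the outgoing wire of a negative $I$ port) and every wire has a negative source, so the number of wires equals the number $q$ of negative ports. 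Hence $E = (V - 1 - p) + q$, and the whole statement reduces to $p = q$.

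The identity $p = q$, that the number of $\parr$-nodes equals the number of negative leaves, is the crux, and it is exactly where intuitionism enters (in full classical \smc/MLL the two conditions are genuinely independent). I would prove it by induction on the \imll\ formula, carrying two quantities for a subformula $D$: the number $P(D)$ of $\parr$-nodes of $D$ placed in \emph{positive} position, and its number $n^+(D)$ of negative leaves in positive position. Dualisation swaps $\parr$ and $\otimes$ and fixes leaves, so $P(D)$ plus the number of $\parr$-nodes of $D^\bot$ equals $\ell(D) - 1$, where $\ell(D)$ is the number of leaves; together with $n^+(D) + n^-(D) = \ell(D)$ this makes the $\impll$ case go through. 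The base cases $x, I$ are trivial and $\tens$ is additive in both quantities; for $D = D_1 \impll D_2$, placing $D_1$ in negative and $D_2$ in positive position and using the induction hypothesis $P(D_i) = n^+(D_i)$ gives $P(D) = 1 + (n^-(D_1) - 1) + n^+(D_2) = n^-(D_1) + n^+(D_2) = n^+(D)$. Applying this to~\eqref{eq:ports}, viewed as a positive formula, yields $p = q$, whence $E = V - 1$ and the lemma.

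The main obstacle is purely the bookkeeping of polarities in this counting lemma: one must dualise cleanly and track which of $\parr$ or $\otimes$ a connective contributes according to the parity of $\impll$'s above it. For robustness I would finally remark that the relaxed representation of $t$-links from Section~\ref{subsec:co:monoids} leaves the count intact: each negative $t$ port still emits exactly one wire, so $q$ is still the number of negative ports, the formula underlying~\eqref{eq:ports} still determines $p$, and the Euler-characteristic identity is unchanged.
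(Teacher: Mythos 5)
Your proof is correct and follows essentially the same route as the paper's sketch: establishing that every switching has exactly one more vertex than edges, by induction on the formulae involved. You simply work out the details the paper leaves implicit (the Euler-characteristic reduction and the polarity bookkeeping showing that $\parr$-nodes and negative leaves are equinumerous in \imll{}).
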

\begin{proof}[sketch]
    One proves by induction on the domain and codomain formulae that the
    graph induced by the switching has one more vertex than it has edges.
\end{proof}

\begin{lemma}
    The map $\T: \bbig(\bsig) \to S(\theory_\bsig)$ is a functor. 
\end{lemma}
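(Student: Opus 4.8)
The plan is to verify the three defining properties of a functor: that $\T$ descends to lean-support equivalence classes (so that it is well defined on the morphisms of $\bbig(\bsig)$, which are such classes), that it preserves identities, and that it preserves composition. The two preceding lemmas already guarantee that each $\T(G)$ is a genuine morphism of $S(\theory_\bsig)$, not merely a candidate, so correctness need not be revisited; what remains is to compare \emph{equalities} of already-correct morphisms. Crucially, here we may reason modulo Trimble rewiring and modulo the equations of $\theory_\bsig$, since these are precisely the quotients defining $S(\theory_\bsig)$, so we only ever need to exhibit a chain of equalities, never a canonical form.

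For well-definedness, recall that lean-support equivalence is generated by support isomorphisms together with deletion of idle edges. A support isomorphism merely relabels the elements of $V$ and $E$, hence the cells of the support $C$; since morphisms of $S(\theory_\bsig)$ do not depend on the identities of their cells, $\T(G)$ is unchanged. For idle edges, note first that a bound edge is never idle (it carries its binding port $p_0$ as a preimage under $\link$), so only an idle \emph{free} edge $e \in E_F$ can occur; its $\nu$ cell then receives no point and is necessarily wired into a $w$ cell, and the $\nu$--$w$ annihilation equation of $\theory_\bsig$ collapses this pair, so that deleting $e$ yields an equal morphism. Thus $\T$ factors through lean-support equivalence. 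Preservation of identities is then immediate: the identity bigraph on $U$ has empty logical support, identity parent map, and identity link map, so $f_t$, $f_v$ and $\link'$ are all identity bijections; encoding a bijection requires no $c$ or $w$ cell and leaves pure axiom wires matching each atom of $\T(U)$ to itself, i.e.\ exactly $\id_{\T(U)}$.

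The main work, and the step I expect to be the chief obstacle, is preservation of composition: for $G : U_1 \to U_2$ and $G' : U_2 \to U_3$ I must show $\T(G' \rond G) = \T(G') \rond \T(G)$. Composition in $S(\theory_\bsig)$ glues the two nets along the shared ports of $\T(U_2)$ and eliminates the resulting cut, which amounts to reconnecting every wire passing through $\T(U_2)$; on the bigraphical side this is matched exactly by the fact that composition forgets the roots, sites and names of $U_2$ while composing the parent and link maps. I would carry this out sort by sort. On $t$, using the commutative-monoid presentation of Section~\ref{subsec:co:monoids}, cut elimination composes the two functions $f_t$, and this composite is the one induced by $\prnt_{G'} \rond \prnt_{G} = \prnt_{G' \rond G}$, the associativity, commutativity and unit equations of $(t,\paral,\zero)$ being exactly what recombines the merges arising at the roots of $U_2$. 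On $v$, gluing identifies the $v$ ports coming from the names of $U_2$ and composes $\link_G$ with $\link_{G'}$; the two $c,w$-forests encoding $\link'_G$ and $\link'_{G'}$ meet along these ports and, by coassociativity, cocommutativity and the counit law of $(v,c,w)$, are equal in $S(\theory_\bsig)$ to the single forest encoding $\link'_{G' \rond G}$, while the $\nu$ cells of the free edges of $G$ and $G'$ assemble into the expected coproduct (any free edge rendered idle by the composition being erased, again by the $\nu$--$w$ equation). On $I$, the wires leaving negative $I$ ports are determined only up to rewiring, so their composition raises no obstruction: the two sides agree after rewiring.

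The delicate point throughout is thus purely combinatorial: matching the glued, normalized $c,w$-forest on the $v$ part (and the glued parent structure on the $t$ part) against the forest of the composite link map (resp.\ the composite parent map). I expect this bookkeeping, rather than any conceptual difficulty, to be where the real effort lies, and it is made tractable precisely because equality in $S(\theory_\bsig)$ allows us to apply the (co)monoid and $\nu$--$w$ equations locally, one cut at a time, without ever computing a normal form.
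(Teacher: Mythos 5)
Your proof is correct and follows essentially the same route as the paper, whose own proof is only a one-line sketch stating that the equations of $\theory_\bsig$ ensure $\T$ behaves well with respect to composition and lean-support equivalence. You have simply made explicit which equation does which job (the $\nu$--$w$ annihilation for idle free edges, the commutative monoid equations on $t$ for the parent maps, the cocommutative comonoid equations on $v$ for the link maps, and rewiring for the $I$ ports), which is a faithful and more detailed elaboration of the intended argument.
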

\begin{proof}[sketch]
    The equations of $\theory_\bsig$ defined in Section~\ref{bigraph:sig}
    ensure that $\T$ behaves well w.r.t.\ composition and lean-support
    equivalence. 
\end{proof}

One sees at once that $\T$ is not full. For example, the morphism in
Fig.~\ref{bigraph:iso_1} has no preimage -- any such preimage would
violate the scope rule for bigraphs.
\begin{figure}[!ht] \centering
    \includegraphics{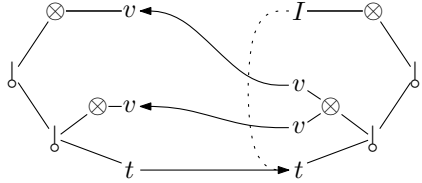}
    \caption{A correct morphism violating the scope rule.}
    \label{bigraph:iso_1}
\end{figure}
This example reflects that it is not necessary to distinguish global and local
variables in a bigraph with only one site. 
Nevertheless, the notion of scope is preserved by $\T$ because closed morphisms
can actually be translated into bigraphs.
In $\bbig(\bsig)$, define 
the interfaces $I = ( 0, \emptyset, \widehat{\emptyset} )$
and $t = (1,\emptyset,\widehat{\emptyset})$.
\begin{lemma}\label{lemma:closed}
  The functor $\T$ induces an isomorphism $S (\theory_\bsig) (I, t)
  \iso \bbig (\bsig) (I, t)$.
\end{lemma}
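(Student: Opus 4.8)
The plan is to exhibit an explicit two-sided inverse to the map $\T : \bbig(\bsig)(I,t) \to S(\theory_\bsig)(I,t)$, rather than invoking faithfulness separately; injectivity and surjectivity then both fall out of the round-trip checks. First I would record the two reference isomorphisms obtained by plugging $I = (0,\emptyset,\widehat{\emptyset})$ and $t = (1,\emptyset,\widehat{\emptyset})$ into~\eqref{eq:T:ob}, namely $\T(I) = I \impll I \iso I$ and $\T(t) = I \impll (I \impll t) \iso t$, so that up to these canonical isos a morphism in $S(\theory_\bsig)(I,t)$ is a net with a single positive $t$ port as codomain and no input ports. Since $U = I$ and $W = t$ carry empty name sets, a bigraph $G : I \to t$ has $X = Y = \emptyset$; hence its link map is a plain function $\link : P \to E$ with every point sent to an edge, and its place graph is a map $V \to V_{\mathit{na}} + 1$ with no sites. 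This is exactly the degenerate regime in which the general translation of Section~\ref{sec:trans} simplifies, and in which the correctness constraint of Lemma~\ref{lemma:connected} forces all nodes into a single $\prec$-tree rooted at the unique root.

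For surjectivity I would reverse the recipes of the \textbf{Places} and \textbf{Links} paragraphs. Given a net $f : I \to t$, I take $V$ to be its $K_k$ cells (with $\ctrl$ their labels), the free edges $E_F$ to be its $\nu$ cells, and the bound edges $E_B$ to be the binding ports $P_B$, recovering $E = E_F + P_B$. The parent map is read off the $t$-linking: using $C^+_t \iso V$, $C^-_t \iso V_{\mathit{na}}$, together with $n = 0$ and the single positive $t$ port of $\T(t)$ ($m = 1$), the relaxed $t$-linking $f_t$ is precisely an acyclic function $\prnt : V \to V_{\mathit{na}} + 1$ whose image avoids atomic nodes (atomic cells have no negative $t$ port). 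The link map is read off the $v$-linking: since the comonoid equations for $(v,c,w)$ collapse every forest of $c$ and $w$ cells to its underlying function, the only datum the $v$-part carries is $\link'$, which under $E_F + P_B \iso C^+_v$ and $P_F \iso C^-_v$ (here $X$ and $Y$ are empty) is a function $P_F \to E_F + P_B$; composing with the canonical $P_B \iso E_B$ rebuilds $\link : P \to E$. Idle free edges correspond to idle $\nu$ cells, which the $\nu$–$w$ annihilation equation erases on the net side exactly as lean-support equivalence discards them on the bigraph side, so this matching is compatible with both quotients.

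The step I expect to be the main obstacle is showing that the reconstructed data is a genuine bigraph, i.e. that it satisfies the \emph{scope rule} (conversely, that $\T(G)$ is always correct is Lemma~\ref{lemma:connected} together with the fact that acyclicity and connectedness of switchings coincide here). Acyclicity of $\prnt$ is immediate from acyclicity of switchings, and the binding rule is vacuous since $Y = \emptyset$. For the scope rule I would argue that connectedness of \emph{all} switchings is exactly what pins every free peer of a binding port $p$ of a node $w$ into the $\prec$-subtree below $w$. Concretely, the binding ports of $w$ occur to the left of the inner $\impll$ in $(v^{\tens B(k)} \impll x) \tens v^{\tens F(k)}$, whose right subformula $x$ hosts the place-subtree plugged under $w$; in the classical reading of~\eqref{eq:ports} this $\impll$ is a $\parr$, and the switching that severs it cuts $w$'s binder region from everything except what is reachable through its body slot. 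If a peer of $p$ lay outside the scope of $w$, this switching would be disconnected, contradicting Lemma~\ref{lemma:connected}. This is the precise point where the Danos--Regnier criterion performs the work of Milner's \emph{ad hoc} scope condition, and I would spell it out by an induction on $\prec$ parallel to the proof of Lemma~\ref{lemma:connected}.

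Finally, the two round-trips are bookkeeping. The reconstruction is manifestly inverse to the translation on the underlying sets of nodes, edges, and on $\prnt$ and $\link$, so $\T(\mathrm{reconstruct}(f))$ rewires back to $f$ and $\mathrm{reconstruct}(\T(G))$ is lean-support equivalent to $G$, provided one checks that the ambiguities quotiented out on each side correspond: the commutative monoid equations on $t$ and the relaxed $t$-linking of Section~\ref{subsec:co:monoids} leave only the function $\prnt$; the cocommutative comonoid equations on $v$ leave only $\link'$; the $\nu$–$w$ equation matches idleness with lean-support; and Trimble rewiring of the negative $I$ ports matches the freedom in $f_I$, which carries no bigraphical data. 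Assembling these identifications yields the announced bijection $S(\theory_\bsig)(I,t) \iso \bbig(\bsig)(I,t)$ induced by $\T$.
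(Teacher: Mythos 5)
Your construction of the candidate bigraph from a net $f : I \to t$ --- nodes from the logical cells, edges from $\nu$ cells and binding ports, $\prnt$ read off the $t$-wiring, $\link$ read off the collapsed contraction trees --- coincides with the paper's, as do the treatment of acyclicity of $\prnt$, the vacuous binding rule, and the remark that atomic cells have no negative $t$ port. The extra bookkeeping you supply for the two round trips (matching the monoid, comonoid, $\nu$--$w$ and rewiring quotients against the relaxed linkings, lean-support equivalence and the freedom in $f_I$) is sound, and makes explicit the injectivity half that the paper delegates to faithfulness of $\T$.

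The gap is in the scope-rule step, which you rightly single out as the crux but argue by the wrong mechanism. Severing the $\parr$ of $(v^{\tens B(k)})^\bot \parr x$ towards the body slot $x$ does \emph{not} isolate the binding ports of $w$: each binding port stays attached, through its wire and the switched contraction tree, to one of its peers, and that peer sits on the formula tree of whatever cell hosts it. An out-of-scope peer therefore produces no disconnection at this $\parr$ --- the switching you describe remains connected precisely through the rogue wire you are trying to rule out. (Your appeal to Lemma~\ref{lemma:connected} here is also circular: that lemma concerns $\T(G)$ for a genuine bigraph $G$, whereas what must be contradicted is the correctness of the arbitrary net $f$.) What a scope violation actually creates is a \emph{cycle}, and this is essentially how the paper argues: take the switching in which the binder $p$ is routed through its own cell's formula up to that cell's positive $t$ port, the other cells have their two $t$ ports connected, and the contraction tree of $p$ is switched towards the offending peer $p'$. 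Then $p$ reaches the root $r$ by two routes --- up the place hierarchy from $w$, and through the wire to $p'$ and then up from $w'$ --- and since $w' \not\prec w$ these routes are internally disjoint until they merge, giving a cycle. (By the acyclic-iff-connected lemma that switching is indeed also disconnected somewhere, but not at the $\parr$ you point to, so the counting lemma cannot rescue your formulation.) With this step replaced, the rest of your proof goes through.
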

\begin{proof}[sketch]
  For any morphism $I \rTo^{f} t$, we construct a candidate bigraph,
  and observe that it vacuously satisfies the binding rule.  Then, we
  proceed by contrapositive: assuming either that its parent map is
  cyclic or that it breaks the scope rule, we show that $f$ was
  incorrect.
\end{proof}

All in all, we have 
\begin{theorem}\label{thm:conservative}
  The functor $\T: \bbig(\bsig) \to S(\theory_\bsig)$ is faithful, essentially
  injective on objects, and surjective on $S(\theory_\bsig)(I, t)$.
\end{theorem}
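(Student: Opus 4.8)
The plan is to treat the three assertions of the theorem separately, dispatching the two easy ones and concentrating on faithfulness. \emph{Surjectivity on $S(\theory_\bsig)(I,t)$} is immediate: Lemma~\ref{lemma:closed} gives an isomorphism $S(\theory_\bsig)(I,t)\iso\bbig(\bsig)(I,t)$ induced by $\T$, and an isomorphism of hom-sets is in particular surjective. (That lemma also yields injectivity of $\T$ \emph{on this one hom-set}, but not faithfulness in general, which must be argued independently.)

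For \emph{essential injectivity on objects}, I would analyse the object map~\eqref{eq:T:ob} directly. The formula $\T(n,X,\loc)=v^{\tens n_g}\impll\Bigtens{i\in n}(v^{\tens n_i}\impll t)$ exposes its shape data: the number of $t$-atoms equals $n$, the left-hand exponent records $n_g=\card{\loc^{-1}(\bot)}$, and the exponents of the factors record the $n_i=\card{\loc^{-1}(i)}$. Since $t$ and $v$ are distinct atomic sorts, no coherence isomorphism of $S(\theory_\bsig)$ can interchange them, so any isomorphism $\T(U)\iso\T(U')$ must match these atom counts. Unwinding which (restricted) isomorphisms are available among formulae of this shape -- essentially the symmetries permuting the $n$ factors and the $v$-leaves within a region, together with the unit law when $n_g=0$ -- shows that $U$ and $U'$ share the same $n$, the same $n_g$, and the same multiset $\ens{n_i}$. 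Two interfaces agreeing on this data differ only by a reindexing of sites and a renaming of names, both of which are isomorphisms in $\bbig(\bsig)$; hence $U\iso U'$. The point requiring care here is the classification of isomorphisms among the formulae in the image of $\T$, for which I would rely on the fact that objects isomorphic in a free \smc\ category carry the same multiset of atomic occurrences with matching polarities.

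The main work is \emph{faithfulness}, i.e.\ injectivity of $\T$ on each hom-set $\bbig(\bsig)(U,W)$. Here I would exhibit a reconstruction inverse to the translation of Section~\ref{sec:trans}. Given a morphism in the image, its logical cells recover the nodes with their controls ($K_k$ cells $\iso V$) and the free edges ($\nu$ cells $\iso E_F$), the bijection $P_B\iso E_B$ recovers the bound edges, the place function $f_t$ is exactly the parent map $\prnt$, and the link function $f_v$ (read off through the comonoid presentation of Section~\ref{subsec:co:monoids}) recovers $\link$. Thus all of $(V,E,\ctrl,\prnt,\link)$ is recovered. The obstacle -- and the only genuinely delicate step -- is to check that this reconstruction is well defined \emph{modulo the quotients} imposed on $S(\theory_\bsig)$, and that these quotients correspond precisely to lean-support equivalence on the bigraph side. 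Concretely, I must verify that the comonoid equations for $(v,c,w)$ collapse the structural $c$/$w$ forests to the bare function $f_v$, that the $\nu$--$w$ annihilation together with Trimble rewiring of negative $I$ ports accounts exactly for idle edges and for the freedom in placing the $I$-wires (which carry no bigraphical information), and that nothing further is identified. Granting this matching, the reconstruction is inverse to $\T$ on morphisms between fixed interfaces, so $\T$ is injective on hom-sets and the theorem follows. I expect this quotient-matching to be the crux, since it is where lean-support equivalence and linear-logical net equivalence have to be shown to coincide.
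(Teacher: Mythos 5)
Your decomposition matches how the paper itself treats this theorem: it offers no separate proof, deriving surjectivity on $S(\theory_\bsig)(I,t)$ from Lemma~\ref{lemma:closed} and leaving faithfulness and essential injectivity to the construction of $\T$. Your handling of surjectivity is exactly the paper's. For faithfulness, the reconstruction you sketch (logical cells give $V$ and $E_F$, the bijection $P_B\iso E_B$ gives bound edges, $f_t$ gives $\prnt$, $f_v$ gives $\link$) is the intended inverse, and you correctly locate the crux in showing that the quotient of $S(\sig_\bsig)$ by the monoid/comonoid equations, the $\nu$--$w$ annihilation, and Trimble rewiring identifies exactly the images of lean-support equivalent bigraphs and nothing more. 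You state this as a hypothesis (``granting this matching'') rather than proving it; that is the one substantive obligation of the faithfulness claim, and neither you nor the paper discharges it explicitly, so flagging it is fair but it remains the open step of your argument. The ingredients are all in the text: the relaxed presentations of Section~\ref{subsec:co:monoids} normalise the $c$/$w$ and $\paral$/$\zero$ forests to bare functions, annihilation accounts precisely for idle (free) edges, and rewiring only moves $I$-wires, which carry no bigraphical data.

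The one place where you genuinely diverge from the paper is essential injectivity, and there you have made the problem harder than it is stated to be. The paper defines ``essentially injective on objects'' as: two objects with the \emph{same} image are isomorphic, i.e.\ $\T(U)=\T(U')$ as formulae implies $U\iso U'$. Under that reading the proof is a direct read-off: equality of the formulae in~\eqref{eq:T:ob} forces equal $n$, equal $n_g$ and equal families $(n_i)_{i\in n}$, so $U$ and $U'$ differ only by a renaming of names, which is an isomorphism of interfaces. You instead prove the stronger statement that $\T(U)\iso\T(U')$ implies $U\iso U'$, and for that you appeal to the ``fact'' that isomorphic objects of a free \smc{} category carry the same multiset of atoms with matching polarities. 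That fact is not automatic here: $S(\theory_\bsig)$ is free on a \emph{theory}, not a bare set of sorts, so isomorphisms need not be coherence maps --- already $I\impll A\iso A$ changes the count of $I$-leaves, and one must separately rule out accidental isomorphisms manufactured from $\nu$, $w$, $c$, $\paral$, $\zero$ and the $K_k$ together with the imposed equations (e.g.\ one must check that $\nu\rond w\neq\id_v$ in the quotient, so that $v\not\iso I$). This classification is a real piece of work that your proposal defers; if you want the stronger statement you must supply it, and if you only want the theorem as stated you can drop it entirely.
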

It is however not full and far from surjective on objects.

\bibliographystyle{splncs}
\bibliography{bib}

\appendix

\section{Proof of Lemma~\ref{lemma:connected}}
Consider a switching of $\T(G)$.

Given a site or a node $p$, we denote by $\T(p)$ the 
negative $t$ port corresponding to it in the switching. If $p$ is
a root, then $\T(p)$ denotes the positive $t$ port of its
image.

Free ports of a node $p$ (resp. local names of a site $p'$) have
their image (a positive $v$ port) connected to $\T(p)$
(resp. $\T(p')$) as shown in Fig.~\ref{bigraph:connect_2}.
Moreover, either one negative $v$ port (corresponding to
a binding port) or the positive $t$ port of the cell $p$ is
connected to $\T(p)$ by the switched formula.
\begin{figure*}[th] \centering
  \includegraphics{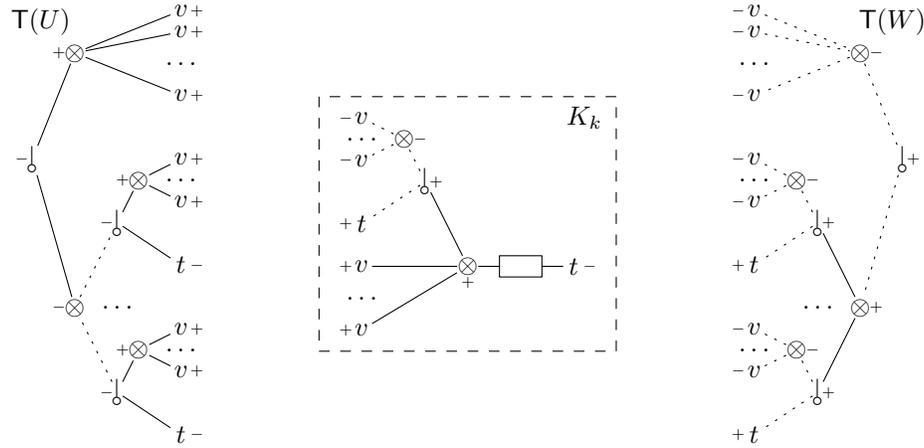}
  \caption{Domain, codomain and a node of a switching.}
  \label{bigraph:connect_2}
\end{figure*}

We now prove by induction that all binding ports (located at a
node or a root $p$) have their image connected to $\T(p)$.  Let
$b$ be a binding port, and $\T(b)$ its image by $\T$ (a
negative $v$ port).

If $b$ has no peers (this is necessarily the case if $p$ is a
barren node), then $\T(b)$ is connected to a $w$ cell whose $I$
port is connected to $\T(p)$.

If $b$ has peers, then $\T(b)$ is connected, in the morphism, to
their translations through a tree of $c$ cells.  But this tree is
heavily switched and only connects $\T(b)$ to one
positive $v$ port $f$ (whose preimage is) located, thanks to the
scope rule, to a site or a node $p' \prec p$.

By induction $f$ is connected to $\T(p')$ and $\T(p')$ is connected to
$\T(p)$ through the (unswitched) parent map. 
Indeed, the parent map connects the $t$ ports of cells between $p'$ and
$p$, and these cells have their $t$ ports connected thanks to the induction
hypothesis.
The port $\T(b)$ and $\T(p)$ are thus connected. 

Finally, we remark that:
\begin{itemize}
\item roots are connected to each other in the codomain's formula (by their $t$
  or $v$ ports, see Fig.~\ref{bigraph:connect_2}), 
\item global variables of the domain are connected to a site (by the domain's formula, see Fig.~\ref{bigraph:connect_2}) and
\item remaining negative $v$ ports (global variable
  of the codomain and $\nu$ cells) are connected to the other
  positive ports by a switched tree of $c$ cells or a
  $w$ cell.
\end{itemize}
We conclude that all ports of our switching are connected.

\section{Proof of Lemma~\ref{lemma:closed}}
  Consider any $f: I \to t$.  We have $\T(I) = (I \impll I) \iso I$
  and $\T(t) = I \impll (I \impll t) \iso t$, which justifies our
  ``induces'' above.  We now define $G = (V,E,\ctrl,\prnt,\link): I
  \to t$ such that $\T (G) = f$.

  Let the set of nodes $V$ be the set of logical cells in $f$; the
  control map $\ctrl$ sends each $K_k$ cell to $k \in \bsig$.  
  
  The set of edges is the coproduct of binding $v$ ports in the
  support of $f$ and of $\nu$ cells (where a $v$ port is binding when
  it occurs to the left of a $\impll$, e.g., a cell of type $((v \tens
  v) \impll t) \tens v \to t$ has two binding ports).

  The parent map $\prnt: 0 + V \to 1 + V$ is exactly the restriction
  of $f$ to $t$ ports.  The link map $\link: P_B + P_F + \emptyset \to
  E + \emptyset$ is obtained from the restriction of $f$ to $v$ ports
  as follows. From any $v$ port $p$, following the tree of
  contractions towards its root leads to a maximal positive
  $v$ port in the support, which may be either a port from a $\nu$
  cell, or a binding port of a logical cell.  In each case, there is a
  corresponding edge $e_p$.  Our link map sends each port $p$ to
  $e_p$. Since in each tree there is only one root, the binding rule
  is respected.

  We then prove that $G$ is correct.  Suppose that the parent map
  contains a cycle, then any switching where, for all cells of the
  cycle, the two $t$ ports are connected contains this cycle.  Suppose
  that the scope rule is not satisfied for a binder $p$ and one of its
  peers $p'$. Then, in $f$, $p$ is the root of a contraction tree with
  $p'$ as a leaf: among the switchings connecting them, choose again
  one that connects both $t$ ports of each logical cell: every logical
  cell then has a path to the root $r$ (the $t$ port in the codomain),
  which forms a cycle involving $p$, $p'$, and $r$, hence
  contradicting correctness of $f$.  The binding rule is automatically
  satisfied because the codomain has no name.  An atomic node has no
  antecedent in the parent map because the corresponding cell in $f$
  has no positive $t$ port.

\end{document}